\documentclass[11pt]{article}
\usepackage{graphicx}
\usepackage{pgfplots}

\usepackage[colorlinks=true,citecolor=blue]{hyperref}
\usepackage{natbib}
\usepackage{graphicx}
\usepackage{amsfonts}
\usepackage{amsmath}
\usepackage{amssymb}
\usepackage{url}
\usepackage{fancyhdr}
\usepackage{indentfirst}
\usepackage{enumerate}
\usepackage{titlesec}
\usepackage{amsthm}
\usepackage{dsfont}
\usepackage[misc]{ifsym}

\usepackage[final]{changes}


\theoremstyle{definition}

\newtheorem{definition}{Definition}

\theoremstyle{plain}
\newtheorem{theorem}{Theorem}
\newtheorem{lemma}{Lemma}
\newtheorem{proposition}{Proposition}

\theoremstyle{remark}
\newtheorem{remark}{Remark}

\usepackage{cases}

\theoremstyle{definition}

\def\N{\mathbb{N}}

\def\E{\mathbb{E}}

\def\R{\mathbb{R}}

\def\X{\mathcal{X}}

\def\X{\mathcal{X}}

\def\d{\mathrm{d}}

\usepackage[onehalfspacing]{setspace}

\usepackage{bm}
\usepackage{tikz-qtree}
\usepackage{tikz}

\def\id{\mathds{1}}

\setlength{\bibsep}{1pt}


\topmargin -1.30cm \oddsidemargin -0.00cm \evensidemargin 0.0cm
\textwidth 16.56cm \textheight 23.20cm

\parindent 5ex

\title{Prudence and higher-order risk attitudes in the rank-dependent utility model}
\author{
	Ruodu Wang\thanks{Department of Statistics and Actuarial Science, University of Waterloo,  Canada. \Letter~{\scriptsize\url{wang@uwaterloo.ca}}}
	\and
	Qinyu Wu\thanks{Department of Statistics and Actuarial Science, University of Waterloo,  Canada. \Letter~{\scriptsize\url{q35wu@uwaterloo.ca}}}
}
\date{\today}

\pgfplotsset{compat=1.18}

\begin{document}
	\maketitle

\begin{abstract}
We obtain a full characterization  of consistency with respect to higher-order stochastic dominance within the rank-dependent utility model.
Different from the results in the literature, we do not assume any condition on the utility functions and the probability weighting functions, such as differentiability or continuity. 
It turns out that the level of generality that we offer  leads to  models that do not have a continuous probability weighting function and yet they satisfy prudence. In particular, the corresponding probability weighting function can only have a jump at $1$, and must be linear on $[0,1)$. 
 

\textbf{Keywords}: Stochastic dominance; expected utility model; completely monotone functions; probability weighting; discontinuity.
\end{abstract}

\section{Introduction}





Stochastic dominance is a widely used concept in  economics,  finance, and engineering for comparing different distributions of uncertain outcomes, particularly in the context of risk preferences in decision theory. Stochastic dominance is considered as a robust way of risk comparison as it allows for analysis without a specific utility function or preference model; see \cite{L15} and \cite{SS07}.

The most popular stochastic dominance rules are the first-order stochastic dominance (FSD) and the  second-order stochastic dominance (SSD).
More recently, higher-order risk attitudes,\footnote{By ``higher-order" we meant an order that is larger than $2$.}  captured by higher-order stochastic dominance, become popular concepts in decision theory; see e.g., \cite{ES06}, \cite{EST09}, \cite{CET13} and \cite{DS14}.
In this paper, we refer to higher-order risk attitudes as \emph{consistency with higher-order stochastic dominance}.
Among these attitudes, prudence (described by third-order stochastic dominance, TSD) is particularly significant as it relates to precautionary behavior, highlighting how prudence influences savings behavior when future income is uncertain, as shown by \cite{K90}. 
Through the concept of risk apportionment,
\cite{ES06} established elegant descriptions of consistency with respect to higher-order risk attitudes on preference relations. These studies emphasize the importance of prudence in modeling and analyzing economic behavior, making it a crucial component in the broader analysis of risk attitudes.
Accurately characterizing higher-order risk attitudes, especially prudence, within different decision-making frameworks is thus important for a deeper understanding of behavior and decision under risk.

In this paper, our goal is to fully characterize higher-order risk attitudes in the rank-dependent utility (RDU) model, introduced by \cite{Q82}.
The RDU model is one of the most popular models for decision under risk, 
and it serves as the building block for the cumulative prospect theory of \cite{TK92}. 
The RDU models include both the expected utility (EU) model and the dual utility model of \cite{Y87} as special cases.  
Characterization of other notions of risk attitudes for RDU can be found in \cite{CKS87}, \cite{W94},   \cite{R06}, and \cite{WW24a}.
We refer to \cite{W10} for a general background on RDU and related decision models. 

All EU models that are consistent with $n$th-order stochastic dominance are precisely those with an $n$-monotone utility function. 
This follows from a result of \cite{M97} and is reported in 
Proposition \ref{prop-eq}.
 Such functions have derivatives up to degree $n-2$, but not necessarily differentiable at degree $n-1$ or $n$. 

In the RDU framework,
it is straightforward to verify 
 that, for a risk-averse decision maker (i.e., SSD consistent), the utility function must be concave, and the probability weighting function must be convex; see \cite{CKS87}, where some differentiability is assumed.
The most relevant result on RDU with higher-order consistency is Theorem 2.1 of \cite{ELS20}, which characterizes the expectation through consistency with TSD under the dual utility model, where the utility function is assumed to be the identity, and the probability weighting function is assumed differentiable up to arbitrary degree. This restriction reduces the class of potential probability weighting functions and offers technical convenience.
The result of \cite{ELS20} is that among all probability weighting functions in the dual utility model,  only the identity is consistent with TSD. 
In contrast, our main result, Theorem \ref{th-main}, shows that when differentiability is not assumed, 
there are more probability weighting functions that yield dual utility models consistent with TSD, and other higher-order risk attitudes. In particular, in the dual utility model, such probability weighting functions, except for the identity, are not continuous, but they are linear on $[0,1)$ and indexed by one parameter.
The corresponding preference model is a mixture of an EU model and a worst-case, most pessimistic, RDU model.
Our results unify existing theories and offers a clear way for evaluating preference relations that align with higher-order risk attitudes. 

The rest of paper is organized as follows. In Section \ref{sec:pre}, we introduces the necessary notations and definitions. Section \ref{sec:mr} presents the main results and discusses the characterization of other notions of risk attitudes for the RDU model found in the literature. All proofs are presented in Section \ref{sec:proof}.

\section{Preliminaries}\label{sec:pre}
Let $a,b\in\R$ with $a<b$. We assume that all random variables take values in the interval $[a, b]$, and we denote this space of random variables as $\mathcal X_{[a,b]}$.
Capital letters, such as $X$ and $Y$, are used to represent random variables, and $F$ and $G$ for distribution functions.



For $X\in\mathcal X_{[a,b]}$, we write $\E[X]$ for the expectation of $X$. 
Denote by $F_X$ and $F_X^{-1}$ the distribution function and the left-quantile function of $X$, respectively, where we have the relation that
$F_X^{-1}(s)=\inf\{x:F_X(x)\ge s\}$ for $s\in(0,1]$. We use $\max(X)$ and $\min(X)$ to represent the essential supremum and the essential infimum of $X$, i.e., $\max(X):=F_X^{-1}(1)$ and
$\min(X):=\inf\{x:F_X(x)> 0\}$.
Denote by $\delta_{\eta}$ the point-mass at $\eta\in\R$. For a real-valued function $f$, let $f_-'$ and $f'_+$ be the left and right derivative of $f$, respectively, and denote by $f^{(n)}$ the $n$th derivative for $n \in \mathbb{N}$. Whenever we use the notation $f_-'$, $f'_+$ and $f^{(n)}$, it is understood that they exist.
We recall that the left derivative of a convex or concave function always exists (see e.g., Proposition A.4 of \cite{FS16}). Denote by $[n]:=\{1,\dots,n\}$ for $n\in\N$.
In this paper, all terms like ``increasing", ``decreasing", ``convex" and ``concave" are in the weak sense.

A decision maker's preference relation $\succsim$ is a weak order\footnote{That is, for $X,Y,Z\in \X$, (a)
either $X\succsim Y$ or $Y\succsim X$; (b) $X\succsim Y$ and $Y\succsim Z$
imply $X\succsim Z$.} on $\mathcal X_{[a,b]}$,  with asymmetric part $\succ$  and symmetric part $\sim$. For $X,Y\in\mathcal X_{[a,b]}$, $X\succsim Y$ means that $X$ is at least as good
as $Y$ for the decision maker.

For a distribution function $F$, denote by $F^{[1]}=F$ and define
\begin{align*}
F^{[n]}(\eta)=\int_{-\infty}^\eta F^{[n-1]}(\xi)\d \xi,~~\eta\in\R~{\rm and}~n\ge 2.
\end{align*}
It is well-known that $F_X^{[n]}(\eta)$ is connected to the expectation of $(\eta-X)_+^n$ (see e.g., Proposition 1 of \cite{OR01}):
\begin{align}\label{eq-SDeq}
F_X^{[n+1]}(\eta)=\frac{1}{n!}\E[(\eta-X)_+^n],~~X\in\mathcal X_{[a,b]},~\eta\in\R,~n\ge 1,
\end{align}
where $x_+=\max\{0,x\}$ for $x\in\R$. 

The following outlines the definitions of $n$th-order stochastic dominance.

\begin{definition}\label{def-nSD}
For $n\in \N$, we say that $X$ dominates $Y$ in the sense of $n$th-order stochastic dominance ($n$SD), denoted by $X\ge_{n} Y$ or $F_X\ge_{n} F_Y$ if 
\begin{align*}
F_X^{[n]}(\eta)\le F_Y^{[n]}(\eta),~\forall \eta\in[a,b]~{\rm and}~F_X^{[k]}(b)\le F_Y^{[k]}(b)~{\rm for}~ k\in[n]
\end{align*}
or equivalently,
\begin{align*}
\E[(\eta-X)_+^{n-1}]\le \E[(\eta-Y)_+^{n-1}],~\forall \eta\in[a,b]~{\rm and}~\E[(b-X)^{k}]\le \E[(b-Y)^{k}]~{\rm for}~ k\in[n-1].
\end{align*}
\end{definition}

For $n\in\{1,2,3\}$, 
$n$SD corresponds to the well-known   FSD, SSD, and TSD. The partial order $\ge_{n}$ for these cases is commonly written as $\ge_{\rm FSD}$, $\ge_{\rm SSD}$ or $\ge_{\rm TSD}$. 
A direct conclusion is that $n$SD is stronger than $(n+1)$SD for $n\ge 1$, i.e., $X\ge_{n}Y$ implies $X\ge_{n+1} Y$.

We say that a preference relation $\succsim$ is \emph{consistent with $n$SD} if $X\succsim Y$ for all $X,Y\in\mathcal X_{[a,b]}$ with $X\ge_n Y$. Intuitively, $n$SD   compares two uncertain outcome, and consistency with $n$SD implies that the decision maker prefers the less risky outcome according to $n$SD. Specifically, consistency with FSD states that higher outcomes are always preferred. Consistency with SSD is related to risk aversion, defined as an aversion to mean-preserving spreads (see \cite{RS70}). Consistency with higher-order stochastic dominance accommodates decision makers who exhibit more refined preference relations, such as prudence when $n=3$ (\cite{K90}) and temperance when $n=4$ (\cite{K92}). 
Their preference descriptions are obtained by \cite{ES06}
via risk apportionment, which generalizes the idea of mean-preserving spreads. 


In some literature, higher-order stochastic dominance is applicable to unbounded random variables; see \cite{R76, F80, SS07}. Contrary to Definition \ref{def-nSD}, this version is referred to as ``unrestricted" stochastic dominance because it does not impose boundary conditions at point $b$, requiring instead that $F_X^{[n]}(\eta) \leq F_Y^{[n]}(\eta)$ for all $\eta \in \mathbb{R}$. 
The $n$SD in Definition \ref{def-nSD} is a more stringent rule than its unrestricted counterpart, and while they provide the same comparisons of random variables within $\mathcal{X}_{[a,b]}$ for $n \leq 3$, distinction emerge for $n \geq 4$; see \cite{WW24}.
Note that the more stringent a stochastic dominance rule, the weaker its consistency property tends to be, leading to stronger characterization results derived from this consistency. Therefore, we opt for the restricted stochastic dominance in Definition \ref{def-nSD} over the more lenient unrestricted version.


\section{Characterization}\label{sec:mr}

\subsection{Expected utility}

Before understanding the consistency properties in rank-dependent utility models, one needs to understand the more basic expected utility (EU) model.
First, we introduce some definitions below.
{\color{blue} A preference relation $\succsim$ admits an EU representation}
with $u:[a,b]\to\R$ if
$$X\succsim Y \iff \E[u(X)]\ge \E[u(Y)].$$
To avoid trivial cases, we assume that $u$ is nonconstant and the relevant set of $u$ is defined as
\begin{align*}
\mathcal U=\{u:\R\to\R:~u~{\rm is~nonconstant}\}.
\end{align*}

\begin{definition}[$n$-monotone functions]
Let $f:[a,b]\to\R$. For $n\ge 2$, we say that $f$ is an $n$-monotone function if $(-1)^{k-1}f^{(k)}\ge 0$ for $k\in[n-2]$ and $(-1)^{n-1}f^{(n-2)}$ is decreasing and convex, where $f^{(k)}$ is the $k$th derivative of $f$ and we assume that $f^{(0)}=f$. In particular,  $f$ is a $1$-monotone function if it is increasing.
\end{definition}

The class of $n$-monotone functions 
are useful in many fields. For instance, they fully describe all Archimedean copulas in statistics; see \cite{MN09}. 
For a mathematical treatment, see \cite{W56}.  If a function $f$ is $n$-monotone for all $n\in \N$, it is called completely monotone; this property is well studied in the mathematics literature and it is closely linked to Laplace--Stieltjes transforms; see e.g., \cite{S38}. Furthermore, \cite{W89} characterized the preference relations that admit an EU representation with all completely monotone utility functions.




Intuitively, an $n$-monotone function generalizes the notion of monotonicity beyond first-order (increasing functions) and second-order (concave functions) behavior.
The next result shows that consistency with $n$SD in the EU framework can be characterized by all $n$-monotone functions.

\begin{proposition}\label{prop-eq}
Let $n\in\N$, and
suppose that {\color{blue}a preference relation $\succsim$ admits an EU representation}
 with $u\in\mathcal U$. Then, $\succsim$ is consistent with $n$SD if and only if $u$ is an $n$-monotone function.
\end{proposition}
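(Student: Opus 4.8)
The plan is to connect the expected-utility functional $\E[u(X)]$ directly to the quantities appearing in Definition \ref{def-nSD} through repeated integration by parts, and then read off both implications from the sign pattern prescribed by $n$-monotonicity. The case $n=1$ is separate and immediate: consistency with FSD applied to the pairs $\delta_{x_2}\ge_1\delta_{x_1}$ for $x_1<x_2$ forces $u$ increasing, and conversely. For $n\ge 2$ I would first establish, for every $X\in\mathcal X_{[a,b]}$, an identity of the form
\begin{align*}
\E[u(X)]=\sum_{k=1}^{n}c_k\,F_X^{[k]}(b)+(-1)^{n}\int_{[a,b]}F_X^{[n]}(\eta)\,\mathrm{d}\phi(\eta),
\end{align*}
where the constants $c_k$ are built from the values $u^{(k-1)}(b)$ (for $k\le n-1$) and from $\phi$ at $b$, and where $\phi=(u^{(n-2)})'_-$ is the one-sided $(n-1)$th derivative. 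Using \eqref{eq-SDeq} the boundary terms $F_X^{[k]}(b)$ are exactly the moments $\E[(b-X)^{k-1}]/(k-1)!$ entering the definition of $n$SD. Pulling out the first $n-2$ derivatives is the routine part; the care is that $u$ need only be differentiable up to order $n-2$, so the final integrations must be carried out as Stieltjes integrals against $\mathrm{d}(u^{(n-2)})'_-$, which is a legitimate signed measure precisely because $(-1)^{n-1}u^{(n-2)}$ is convex (hence locally Lipschitz with a monotone one-sided derivative).

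For sufficiency, suppose $u$ is $n$-monotone. The conditions $(-1)^{k-1}u^{(k)}\ge 0$ for $k\in[n-2]$ fix the signs of the $c_k$ so that each inequality $F_X^{[k]}(b)\le F_Y^{[k]}(b)$ pushes the corresponding boundary contribution in the direction $\E[u(X)]\ge\E[u(Y)]$. Meanwhile, the hypothesis that $(-1)^{n-1}u^{(n-2)}$ is decreasing and convex makes $(-1)^{n-1}\,\mathrm{d}\phi$ a nonnegative measure, equivalently $(-1)^{n}\,\mathrm{d}\phi\le 0$; since $X\ge_n Y$ gives $F_X^{[n]}\le F_Y^{[n]}$ pointwise on $[a,b]$, the integral term integrates a nonpositive function against a nonpositive measure and therefore also respects $\E[u(X)]\ge\E[u(Y)]$. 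Adding the contributions yields consistency with $n$SD.

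For necessity I would probe $u$ with explicit pairs whose $n$SD ordering is transparent. The lower-order signs $(-1)^{k-1}u^{(k)}\ge 0$, $k\in[n-2]$, follow by comparing finite mixtures of point masses $\delta_\eta$ whose difference realizes an $n$th-order finite difference of $u$ and is ordered by $n$SD, then letting the mesh shrink to recover the derivative conditions at interior points. The delicate part is the top order: I must extract that $(-1)^{n-1}u^{(n-2)}$ is decreasing and convex \emph{without} assuming it is twice differentiable. Here I would test against pairs $X\ge_n Y$ supported on three or four nearby points, chosen so that $\E[u(X)]-\E[u(Y)]$ equals a (one-sided) second divided difference of $(-1)^{n-1}u^{(n-2)}$ up to a positive factor; nonnegativity of all such divided differences is exactly convexity, and a one-sided truncation gives the monotonicity. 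The main obstacle throughout is precisely this absence of smoothness, which rules out a naive derivative-by-derivative argument and forces both the representation and the test-distribution constructions to be phrased via one-sided derivatives, Stieltjes measures, and divided differences rather than classical derivatives — which is also what allows the characterization to capture utilities that fail to be $n$-times differentiable, as the statement intends.
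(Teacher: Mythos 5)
Your proposal is correct in outline but takes a genuinely different route from the paper. The paper's proof is essentially a two-line appeal to the general theory of integral stochastic orders: it notes that $\ge_n$ is, by Definition \ref{def-nSD}, generated by the test functions $x\mapsto(\eta-x)_+^{n-1}$ and $x\mapsto(b-x)^k$, that (up to sign) these lie in the class of $n$-monotone functions, and that this class is a convex cone closed under pointwise convergence; Corollary 3.8 of \cite{M97} then identifies the $n$-monotone functions as the maximal generator of the order, which is exactly the claimed equivalence. That approach buys brevity and both implications at once, at the cost of outsourcing all the analysis to M\"uller's theorem. Your approach makes that analysis explicit: the integration-by-parts identity you write down is the Riesz-type representation underlying such results, your sign bookkeeping for the boundary terms $F_X^{[k]}(b)$ (which via \eqref{eq-SDeq} are the moments $\E[(b-X)^{k-1}]/(k-1)!$) and for the measure $(-1)^{n}\,\mathrm{d}\phi$ with $\phi=(u^{(n-2)})'_-$ is consistent with the definition of $n$-monotonicity, and the sufficiency half goes through as you describe. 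It is a more elementary and self-contained argument, and it exposes precisely which feature of $n$-monotonicity is paired with which clause of Definition \ref{def-nSD}.

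One point in your necessity sketch needs more care than you give it. The conclusion ``$u$ is $n$-monotone'' asserts in particular that $u^{(k)}$ \emph{exists} for $k\in[n-2]$, and this existence must be derived from consistency with $n$SD rather than presupposed; your phrase ``letting the mesh shrink to recover the derivative conditions at interior points'' quietly assumes the derivatives are already there. The repair is classical but should be stated: the test lotteries yield one-signedness of all divided differences of $u$ of each order up to $n$ (checking for each test pair the boundary conditions at $b$ required by the restricted order in Definition \ref{def-nSD}, which the standard risk-apportionment constructions do satisfy), and a function whose $m$th-order divided differences have constant sign on an interval is automatically $(m-2)$-times continuously differentiable there, with $f^{(m-2)}$ convex or concave accordingly (Popoviciu's regularity theorem for higher-order convex functions). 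With that ingredient added, your divided-difference route recovers necessity, including the top-order monotonicity and convexity of $(-1)^{n-1}u^{(n-2)}$, without any smoothness assumption.
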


In \cite{F76}, the congruent set $U$ of utility functions for a stochastic dominance $\ge_{\rm SD}$ is defined as follows: $X\ge_{\rm SD} Y$ if and only if $\E[u(X)]\ge \E[u(Y)]$ for all $u\in U$. 
Note that the congruent set for a stochastic dominance is not unique. 
There are several congruent sets for $n$SD that have been extensively studied (see e.g., \cite{DE13} and
Section 4.A.7 of \cite{SS07}).
Proposition \ref{prop-eq} identifies the largest congruent set of $n$SD within the class of all utility functions.

\subsection{Rank-dependent utility}

Next, we present the definition of the \emph{rank-dependent utility} (RDU) model (\cite{Q82}). An RDU function incorporates an increasing utility function $u\in \mathcal U$ and a probability weighting function $h$ that is an element of the following set:
\begin{align*}
\mathcal H=\{h:[0,1]\to[0,1]:~h~\mbox{is~increasing,}~h(0)=0, \mbox{~and~} h(1)=1\},
\end{align*}
and it has the form
\begin{align*}
R_{u,h}(X)=\int_0^\infty h\circ \overline{F}_{u(X)}(\eta)\d \eta+\int_{-\infty}^0 (h\circ \overline{F}_{u(X)}(\eta)-1)\d \eta,
\end{align*}
where $\overline{F}=1-F$ is the survival function.
If $h$ is the identity function, then the RDU model reduces to the expected utility. On the other hand,
if $u$ is the identity, then the RDU model is the \emph{dual utility} (\cite{Y87}) that has the following definition:
\begin{align*}
I_h(X)=\int_0^\infty h\circ \overline{F}_{X}(\eta)\d \eta+\int_{-\infty}^0 (h\circ \overline{F}_{X}(\eta)-1)\d \eta,
\end{align*} 
For simplicity, we consider $R_{u,h}(F_X)$ (resp.~$I_h(F_X)$) and $R_{u,h}(X)$ (resp.~$I_h(X)$) to be identical.

{\color{blue} A preference  relation $\succsim$ admits an RDU representation}
with $u\in\mathcal U$ and $h\in\mathcal H$ if $$X\succsim Y \iff R_{u,h}(X)\ge R_{u,h}(Y).$$ 
It is straightforward to see that a preference relation that admits an RDU representation satisfies consistency with FSD. Moreover, if $u$ and $h$ are both differentiable, then
consistency with SSD holds if and only if $u$ is concave and $h$ is convex; see \cite{CKS87}. 
{\color{blue}The following theorem is our main result, which establishes the characterization of consistency with higher-order stochastic dominance in the RDU model. 

\begin{theorem}\label{th-main}
Let $n\ge 3$, and
suppose that  a preference relation admits an RDU representation
with $u\in\mathcal U$ and $h\in\mathcal H$. Then, $\succsim$ is consistent with nSD if and only if the one of the two following cases hold:
\begin{itemize}
\item[(i)] $u\in\mathcal U$ is increasing and $h(s)=\id_{\{s=1\}}$ for all $s\in[0,1]$. In this case, $R_{u,h}(X)=\min u(X)$ for all $X\in \mathcal X_{[a,b]}$.

\item[(ii)] $u$ is an $n$-monotone function and  $h(s)=\lambda s\id_{\{s<1\}}+\id_{\{s=1\}}$ for all $s\in[0,1]$ with some $\lambda\in(0,1]$. In this case, $R_{u,h}(X)=\lambda\E[u(X)]+(1-\lambda)\min u(X)$ for all $X\in\mathcal X_{[a,b]}$.
\end{itemize}
\end{theorem}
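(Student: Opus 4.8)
The plan is to work throughout with the quantile (distortion) form of the functional. Writing $\gamma(p)=1-h(1-p)$ for $p\in[0,1]$, a change of variables rewrites the RDU functional as $R_{u,h}(X)=\int_{[0,1]}u(F_X^{-1}(p))\,\d\gamma(p)$, where $\gamma$ is increasing with $\gamma(0)=0$ and $\gamma(1)=1$, the atom $\gamma(0^+)=1-h(1^-)$ of $\d\gamma$ at $0$ records the jump of $h$ at $1$, and $F_X^{-1}(0^+)=\min(X)$. In these terms the two target models are exactly $\d\gamma=\delta_0$ (case (i)) and $\d\gamma=(1-\lambda)\delta_0+\lambda\,\mathrm{Leb}$ (case (ii)), so the theorem amounts to showing that $n$SD-consistency forces $\gamma$ to be affine on $(0,1]$ with an atom only at $0$, together with $n$-monotonicity of $u$ whenever that atom is not full.

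For sufficiency I would first record the elementary fact that $X\ge_{n}Y$ implies $\min(X)\ge\min(Y)$: if $\min(Y)>\min(X)$, then for $\eta\in(\min(X),\min(Y))$ one has $F_Y^{[n]}(\eta)=\tfrac{1}{(n-1)!}\E[(\eta-Y)_+^{n-1}]=0<F_X^{[n]}(\eta)$, contradicting Definition \ref{def-nSD}. Hence $X\mapsto\min u(X)=u(\min(X))$ is $n$SD-monotone. Combining this with Proposition \ref{prop-eq} for the expectation term shows that both $\min u(\cdot)$ and $\lambda\E[u(\cdot)]+(1-\lambda)\min u(\cdot)$ (with $u$ $n$-monotone and $\lambda\in(0,1]$) are $n$SD-consistent, which gives the ``if'' direction.

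For necessity, note that $\ge_{2}\subseteq\ge_{3}\subseteq\ge_{n}$, so $n$SD-consistency implies SSD- and TSD-consistency. From SSD-consistency I would extract, via two- and three-point mean-preserving spreads and using one-sided derivatives rather than differentiability, that $u$ is concave and $h$ is convex, equivalently that $\gamma$ is concave. Next I would peel off the atom at $0$: since $F^{[n]}$ is linear in the distribution, mixing $X\ge_{n}Y$ with a common law preserves $\ge_{n}$, so I may add a common atom below both supports to arrange $\min(X)=\min(Y)$; on such pairs the $\min$-term cancels and consistency passes to the interior functional $\int_{(0,1]}u(F^{-1}(p))\,\d\gamma(p)$. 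Granting the key linearity claim below, the functional becomes $\lambda\E[u(X)]+(1-\lambda)\min u(X)$ for some $\lambda\in[0,1]$; if $\lambda=0$ we are in case (i) with $u$ merely increasing, while if $\lambda>0$ the same common-mixture reduction gives $\E[u(X)]\ge\E[u(Y)]$ for every $X\ge_{n}Y$, whence $u$ is $n$-monotone by Proposition \ref{prop-eq}, which is case (ii). The constraint $0\le\lambda\le1$ is forced by $h\in\mathcal H$.

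The heart of the argument—and the step I expect to be the main obstacle—is upgrading concavity of $\gamma$ to affinity on $(0,1)$. Concavity alone permits kinked weightings such as Expected Shortfall, which survive \emph{symmetric}, variance-preserving prudence lotteries because these produce only a third difference of $\gamma$ and yield equalities. The decisive test instead uses \emph{asymmetric} mean- and variance-preserving downside transfers: contract a spread located at a low quantile level $p_1$ while expanding a spread at a higher level $p_2>p_1$, the magnitudes tied by variance preservation. Such a transfer is a genuine $n$SD (indeed TSD, downside-risk-reducing) improvement, and a direct computation shows its effect on the interior functional is proportional to $-c_1\,\Delta_a\gamma(p_1)+c_2\,\Delta_a\gamma(p_2)$, where $\Delta_a\gamma(p)=\gamma(p+a)-2\gamma(p)+\gamma(p-a)\le0$ and the ratio $c_1/c_2$ is a free parameter governed by the base spreads. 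Letting this ratio diverge forces $\Delta_a\gamma(p_2)=0$ for every interior $p_2$, so $\gamma$ is midpoint-affine and, being concave, affine on $(0,1)$; translating back, $h$ is linear on $[0,1)$. To run this for a general utility I would localize each transfer in a tiny value window, where $u$ is arbitrarily close to affine, so the leading effect is $u'_+(c)>0$ times the dual-utility effect and the conclusion is unchanged; the freedom to choose the base distribution lets me place $p_2$ at any value where $u$ is strictly increasing. The delicate points are verifying the TSD-validity of these transfers globally rather than only infinitesimally, and controlling the localization error without differentiability, which is why I would argue throughout with exact second differences of $\gamma$ instead of $\gamma''$.
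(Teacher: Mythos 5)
Your overall architecture is sound and matches the paper where it is routine: the sufficiency direction (the observation that $X\ge_n Y$ forces $\min(X)\ge \min(Y)$, combined with Proposition \ref{prop-eq}) and the final extraction of $n$-monotonicity of $u$ via mixing in a common atom at $a$ so that the $\min$-terms cancel are exactly the paper's arguments. The real content of the theorem, however, is the step that upgrades convexity of $h$ to linearity on $[0,1)$, and there your proposed construction has a concrete flaw. You write the effect of your transfer as $-c_1\Delta_a\gamma(p_1)+c_2\Delta_a\gamma(p_2)$ with the \emph{same} probability width $a$ at both locations and claim the ratio $c_1/c_2$ is free. It is not: for symmetric two-point spreads of common probability width $a$ and outcome displacements $d_1,d_2$, variance preservation forces $a d_1^2=a d_2^2$, i.e.\ $d_1=d_2$, so the ratio is pinned to $1$ and the limit you want to take is unavailable. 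If instead you allow different widths $q_1\ne q_2$, the two second differences live at different scales and the constraint becomes $-\Delta_{q_2}\gamma(p_2)\le \sqrt{q_2/q_1}\,\bigl(-\Delta_{q_1}\gamma(p_1)\bigr)$; since $q_1\le 1$, the prefactor is bounded below by $\sqrt{q_2}$, and for a merely concave $\gamma$ with a kink at $p_1$ one has $-\Delta_{q_1}\gamma(p_1)\asymp q_1$, so the bound does not tend to $0$ and does not rule out, e.g., Expected-Shortfall-type kinks by itself. The argument can be rescued---choose $p_1$ at a point where the concave function $\gamma$ is differentiable, so that $-\Delta_{q_1}\gamma(p_1)=o(q_1)$, and send $q_1\to 0$ after first letting the displacements shrink to linearize $u$---but that repair is a genuinely different and nontrivial argument that your sketch neither states nor justifies (in particular the order of limits matters, because shrinking $q_1$ at fixed variance blows up $d_1$ and leaves $[a,b]$). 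A secondary inaccuracy: SSD-consistency does \emph{not} imply $u$ is concave; when $h$ jumps at $1$ (your atom of $\d\gamma$ at $0$), $u$ may be any increasing function, which is precisely case (i). The paper handles this by invoking the dichotomy in Corollary 12 of \cite{R06} before doing anything else; your proposal needs an analogous case split rather than a blanket concavity claim.

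For comparison, the paper's Lemma \ref{lm-consh} avoids the variance-bookkeeping problem entirely by using an Eeckhoudt--Schlesinger risk-apportionment pair: the \emph{same} zero-mean two-point lottery $\tilde\varepsilon$ is attached either to the low outcome ($X$) or to the high outcome ($Y$), which is automatically a TSD improvement with matched mean and variance. The asymmetry is then injected through the shape of $\tilde\varepsilon$ itself ($n\epsilon$ with probability $1-1/n$ versus $-n(n-1)\epsilon$ with probability $1/n$); letting $y\downarrow 0$, $\epsilon\downarrow 0$ and then $n\to\infty$ collapses the resulting inequality to $h(\alpha)/\alpha\ge (h(\alpha)-h(\alpha/2))/(\alpha/2)$, which against convexity yields the functional equation $h(\alpha)=2h(\alpha/2)$ and hence $h=\mathrm{id}$ in the continuous case. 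That route needs no second-difference asymptotics and no choice of differentiability points. I would recommend either adopting that construction or fully writing out the unequal-width version of your transfer argument, including the verification that the transfers are globally TSD-ordered and the control of the localization error in $u$.
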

}



\begin{remark}
In Theorem \ref{th-main}, the two cases can be combined under a strict monotonicity condition. 
    Suppose that a preference relation $\succsim$ that admits an RDU model is monotone for constant, that is, $c>d$ implies $c\succ d$. Then $\succsim$ is consistent with $n$SD if and only if it can be represented by some strictly increasing $n$-monotone function $u$ and $h(s)=\lambda s\id_{\{s<1\}}+\id_{\{s=1\}}$ for all $s\in[0,1]$ with some $\lambda\in[0,1]$.
    This is because in case (i), choosing different strictly increasing functions $u$ leads to the same preference relation.
\end{remark}

{\color{blue}
Prudence, which reflects a precautionary saving behavior (\cite{K90}), corresponds to consistency with TSD. Theorem \ref{th-main} includes as a special case the characterization of prudent decision makers under the RDU framework with general utility functions and probability weighting functions. In contrast to \cite{ELS20}, who showed that only the identity weighting function is consistent with TSD under some smoothness assumptions, our result characterizes a class of preference relations that can be represented as convex combinations of an EU model and a worst-case pessimistic RDU model, offering a  nontrivial extension. This is an instance where smoothness assumptions exclude some economically meaningful models.}

As an immediate consequence of Theorem \ref{th-main}, if a preference relation $\succsim$ is represented by
$$
X\succsim Y \iff \lambda \E[u(X)] + (1-\lambda) \min v(X) \ge \lambda \E[u(Y)] + (1-\lambda) \min v(Y) ,
$$
where $u$ is $n$-monotone and $v\in\mathcal U$ is increasing, then 
$\succsim$ is consistent with $n$SD. Note that this preference relation cannot be represented by the RDU model unless $u=v$.
More generally, a preference relation represented by a mixture of several RDU functions $R_{u,h}$ that are consistent with $n$SD is again consistent with $n$SD, although it does not necessarily \textcolor{blue}{admit an RDU model representation}.

\subsection{Discussion}

Next, we discuss our characterization results  and other notions of risk attitudes for the RDU model in the literature. 
Consistency with respect to each risk attitude corresponds to a set 
 $\mathcal{M} \subseteq \mathcal{U} \times \mathcal{H}$ of pairs of  utility functions and probability weighting functions.
For many risk attitudes, the set $\mathcal M$, 
has a \emph{separable form}; that is, it imposes conditions on $u\in \mathcal U$ 
 and $h\in \mathcal H$ separately, except for the trivial case that $h_{\id}(s)=\id_{\{s=1\}}$ (in this case, $u$ does not matter). 
We write this separable form as $\mathcal U_*\times \mathcal H_*$,
which means $$\mathcal M=(\mathcal U_*\times \mathcal H_*)\cup(\mathcal U\times \{h_{\id}\}).$$ Remarkably, there are some notions  of attitude that impose a joint condition on the interplay of $u$ and $h$.  In what follows, RA stands for risk aversion.

\begin{enumerate}
\item Our Theorem \ref{th-main} demonstrates that the set $\mathcal M$ corresponding to consistency with higher-order stochastic dominance has a separable form, with  $\mathcal U_*$ being the set of $n$-monotone functions,
and the $\mathcal H_*$ being linear on $[0,1)$.

\item The set $\mathcal M$ corresponding to  consistency with SSD  has a separable form; see \cite{CKS87} and \cite{R06}. In this case, $\mathcal U_*$ is the set of increasing and concave elements of $\mathcal U$,
and $\mathcal H_*$ is the set of all convex elements of $\mathcal{H}$.
This consistency is also known as strong RA. 

\item The case of FSD is trivial as all RDU models are consistent with FSD.

\item   Probabilistic risk aversion (P-RA) in the RDU model  means quasi-convexity of
the RDU functional with respect to distribution functions; see \cite{W94}.
As shown by \cite{WW24a}, the set $\mathcal M$ corresponding to P-RA   has a separable form, in which $\mathcal U_*=\mathcal U$ and $\mathcal H_*$ is slightly larger than the set of convex probability weighting functions.

\item 
 Next, we discuss some  notions of risk attitudes whose characterization in RDU 
 leads to joint conditions on $u$ and $h$. 
 \cite{CCM05} studied the characterization of monotone RA in the RDU model.\footnote{\color{blue}
For two random variables $X,Y: \Omega\to\R$, we say that $Y$ is a monotone mean-preserving increase in risk
 of $X$ if there exists a random variable $Z: \Omega\to\R$ with $\E[Z]=0$ such that $X$ and $Z$ are comonotone and $Y$ has the same distribution function as $X+Z$, where the statement that $X$ and $Z$ are comonotone means that  $(X(\omega)-X(\omega'))(Z(\omega)-Z(\omega'))\ge 0$ for all $\omega,\omega'\in\Omega$.  A preference relation $\succsim$ exhibits monotone RA if   $X\succsim Y$ whenever $Y$ is  a monotone mean-preserving increase in risk
 of $X$. } 
They showed that, under the assumption that   $u$ is continuous and strictly increasing, and   $h$ is strictly increasing, the characterization of this consistency property is $G_u\le P_h$, where $G_u$ and $P_h$ are the \emph{index of greediness} for $u$ and \emph{the  index of pessimism} for $h$, defined as 
\begin{align}\label{eq-greedypessim}
G_u= \sup_{a \leq x_1 < x_2 \leq x_3 < x_4 \leq b} 
\frac{u(x_4) - u(x_3)}{x_4 - x_3} 
\left/ \vphantom{\frac{u(x_4) - u(x_3)}{x_4 - x_3}} \right. 
\frac{u(x_2) - u(x_1)}{x_2 - x_1};~~~
P_h= \inf_{0 < s < 1} 
\frac{1 - h(s)}{1-s} 
\left/ \vphantom{\frac{1-h(s)}{1-s}} \right. 
\frac{h(s)}{s}.
\end{align}
The condition $G_u\le P_h$ clearly illustrates the interplay between $u$ and $h$.

\item Two notions of fractional SD were introduced by \cite{MSTW17}
and  \cite{HTZ20}. 
Let us focus on the most relevant cases of fractional SD between first-order and second-order SD. 
For the notion of  \cite{MSTW17} with parameter $\gamma\in(0,1)$, denoted f-$\gamma$-SD,\footnote{\color{blue} We say that $X$ dominates $Y$ under f-$\gamma$-SD if $\E[u(X)]\ge \E[u(Y)]$ for all functions $u$ satisfying $0\le \gamma u'(y)\le u'(x)$ for all $x\le y$.} 
\cite{MW22} showed that (under continuity) 
the consistency in RDU is equivalent to $Q_h\ge \gamma G_u$, 
where $G_u$ is  given in \eqref{eq-greedypessim}
and $Q_h$ is given by \begin{align*}
Q_h= \inf_{0 \leq s_1 < s_2 \leq s_3 < s_4 \leq 1} 
\frac{h(s_4) - h(s_3)}{s_4 - s_3} 
\left/ \vphantom{\frac{h(s_4) - u(s_3)}{s_4 - s_3}} \right. 
\frac{h(s_2) - h(s_1)}{s_2 - s_1}.
\end{align*}
Hence, the set $\mathcal M$ does not have a separable form.
However, for the notion of  \cite{HTZ20} with parameter $ c\in(0,1)$,  denoted  f-$c$-SD,\footnote{\color{blue} We say that $X$ dominates $Y$ under f-$c$-SD if $\E[u(X)]\ge \E[u(Y)]$ for all functions $u$ satisfying $u'(x)>0$ and $u''(x)/u'(x)\le 1/c-1$ for all $x$.} \cite{MW22} showed that the set $\mathcal M$ has a separable form, where $\mathcal U_*$ contains $u$ with $x\mapsto u(c\log(x)/(1-c))$ being concave and $\mathcal H_*$ contains all convex elements of $\mathcal H$.

\item To the best of our knowledge,
a full characterization of weak RA\footnote{\color{blue} We say that a decision maker with a preference relation $\succsim$ exhibits weak RA if $\E[X]\succsim X$ for all random variables $X$.} in the RDU model has not been established in the literature; see \cite{C95} for some sufficient conditions. 
The existing results imply that $\mathcal M$ does not have a separable form. 

\end{enumerate}

We summarize in Table \ref{tab:1} the above cases. 
To systemically understand which notions of risk attitude leads to a separable form of $\mathcal M$ seems not clear at this point.

\begin{table}[t]
    \centering\renewcommand{\arraystretch}{1.5}
    \begin{tabular}{c|c|c} 
        {Risk attitude} & $\mathcal M$ separable? & Source \\ \hline
        {FSD}  & YES  & definition \\ \hline
        {SSD}           & YES    & \cite{CKS87}; \cite{R06}      \\ \hline
        {$n$SD}         & YES    &  Theorem \ref{th-main}       \\ \hline
        {P-RA}          & YES  & \cite{W94}; \cite{WW24a}         \\ \hline
        {M-RA}          & NO    & \cite{CCM05}       \\ \hline
        {f-$\gamma$-SD} & NO    &\cite{MW22}    \\ \hline
        {f-$c$-SD}   & YES   &\cite{MW22}        \\ \hline
        {weak RA}       & NO   &\cite{C95} (not fully characterized)        \\ \hline
    \end{tabular}
    \caption{Summary of whether $\mathcal M$ has a separable form;  some results imposed regularity conditions.}
    \label{tab:1}
\end{table}


\section{Proofs}\label{sec:proof}

\subsection{Proposition \ref{prop-eq}}
Define 
\begin{align*}
\mathcal{U}_n &= \{  u  \mid u(x) = (\eta-x)_+^{n-1}, \eta \in [a,b] \} 
\cup \{  u  \mid u(x) = (b-x)^k, k \in [n-1]\}.
\end{align*}
By Definition \ref{def-nSD}, we know that $X\ge_n Y$ if and only if $\E[u(X)]\ge \E[u(Y)]$ for all $u\in\mathcal U_n$. Note that each $u\in\mathcal U_n$ is an $n$-monotone function. Moreover, the set of all $n$-monotone functions is a convex cone and closed with respect to pointwise convergence. Hence, the result follows immediately from Corollary 3.8 of \cite{M97}. \qed

\subsection{Theorem \ref{th-main}}
In this proof, we will encounter simple random variables. An explicit representation of $R_{h,u}$ with simple random variables is given below.
For $X\in\mathcal X_{[a,b]}$ with the distribution $F_X=\sum_{i=1}^n p_i\delta_{x_i}$ where $x_1\ge  \dots\ge x_n$, $p_1,\dots,p_n\ge 0$ and $\sum_{i=1}^n p_i=1$, it holds that 
$$ 
R_{u,h}(X)=\sum_{i=1}^n (h(p_1+\cdots+p_i)-h(p_1+\cdots+p_{i-1}))u(x_i).
$$

The necessity statement of Theorem \ref{th-main} is the most challenging.
To establish it, we introduce a useful lemma that outlines the constraints on $h$.

\begin{lemma}\label{lm-consh}{\color{blue}
Let $n\ge 3$, and let $u\in\mathcal U$ and $h\in\mathcal H$. If the RDU function $R_{u,h}:\mathcal X_{[a,b]}\to\R$ is consistent with nSD, then $h(s)=\lambda s\id_{\{s<1\}}+\id_{\{s=1\}}$ for all $s\in[0,1]$ with some $\lambda\in[0,1]$.}
\end{lemma}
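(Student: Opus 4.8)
The plan is to extract strong structural constraints on $h$ by feeding the consistency requirement a carefully chosen family of simple two- or three-point random variables and exploiting the explicit formula for $R_{u,h}$ on such distributions. The key observation is that $n$SD consistency for $n\ge 3$ implies consistency with every higher order, but since we only get to use distributions we construct, the real leverage comes from pairs $X\ge_n Y$ that are as close to equality as possible — these force $h$ to satisfy equalities (linear behavior) rather than mere inequalities. First I would look for pairs of simple distributions $F_X,F_Y$ supported on a common grid of points $x_1>\dots>x_m$ such that $X\ge_n Y$ holds (checking the defining inequalities $\E[(\eta-X)_+^{n-1}]\le \E[(\eta-Y)_+^{n-1}]$ and the moment conditions of Definition \ref{def-nSD}), and such that the induced inequality $R_{u,h}(X)\ge R_{u,h}(Y)$, after cancelling the common $u(x_i)$ terms, reads as a linear inequality purely in the increments $h(p_1+\cdots+p_i)$. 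Because $u$ is unknown and arbitrary in $\mathcal U$, I expect to need pairs where the coefficients of all but a controllable combination of the $u(x_i)$ cancel, so that the constraint isolates $h$ alone; the flexibility of $u$ actually helps here, since if different sign patterns of $u$-increments were allowed the inequality could only hold when the $h$-coefficient vanishes.

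The second step is to convert these constraints into the functional equation forcing $h(s)=\lambda s$ on $[0,1)$. Concretely, I would aim to show two things: (a) $h$ restricted to $[0,1)$ satisfies a Jensen-type or Cauchy-type additivity/midpoint condition, and (b) $h$ is linear on $[0,1)$ with some slope $\lambda\ge 0$, with a possible jump only at $s=1$. For (a), the natural device is to compare a three-atom distribution with a ``spread'' version obtained by a mean-preserving-type perturbation that stays within $n$SD in both directions when the middle probabilities are balanced; consistency in both directions yields an equality $h(p+q)-h(p)=$ (something linear), i.e.\ constant increments, which is the additive Cauchy condition. Since $h$ is increasing and bounded, any additive/midpoint-linear solution on $[0,1)$ is genuinely linear $h(s)=\lambda s$; monotonicity gives $\lambda\ge 0$, and $h(1)=1$ together with $\lambda\in[0,1]$ follows from comparing $h(s)\le h(1)=1$ and the slope constraint. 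The endpoint value $h(1)=1$ is fixed by definition of $\mathcal H$, so the only freedom left is the slope $\lambda$ on $[0,1)$ and whether there is a jump at $1$; when $\lambda<1$ the jump is forced by $h(1)=1$.

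The main obstacle I anticipate is step one: designing the test distributions so that the $n$SD comparison is valid for \emph{general} $n\ge 3$ while simultaneously making the $u$-dependence cancel cleanly. The difficulty is that $n$SD for large $n$ is a weak ordering, so producing nontrivial comparable pairs $X\ge_n Y$ with equality in enough of the defining conditions (so that the reverse comparison also nearly holds) is delicate, and the moment boundary conditions $\E[(b-X)^k]\le \E[(b-Y)^k]$ for $k\in[n-1]$ must be respected. I expect to handle the $u$-cancellation by choosing the two distributions to differ only in how mass is split among points that map under $h$ to the increments I want to compare, keeping $\sum$ of $u$-weighted terms identical except for one isolated coefficient — possibly by taking limits of such configurations as atoms coalesce, so that the decreasing-convexity of $(-1)^{n-1}f^{(n-2)}$ built into $n$-monotonicity is not needed and only the raw $h$-constraint survives. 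The separation of the jump at $1$ from the behavior on $[0,1)$ will likely require a separate limiting argument as the largest atom's probability tends to $1$, isolating $\lim_{s\to 1^-}h(s)$ versus $h(1)=1$.
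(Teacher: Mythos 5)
The proposal has the right target (linearity of $h$ on $[0,1)$ plus a possible jump at $1$) but two of its load-bearing steps are either backwards or cannot work as described.

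First, the reduction across orders is inverted. Since $X\ge_3 Y$ implies $X\ge_n Y$ for every $n\ge 3$, consistency with $n$SD is a \emph{stronger} requirement than consistency with TSD, not a source of ``consistency with every higher order.'' The paper uses exactly this observation to reduce the entire lemma to the single case $n=3$. The ``main obstacle'' you anticipate --- designing test lotteries whose $n$SD comparison is valid for general $n$ while respecting the boundary moment conditions --- therefore does not arise at all, but your stated reason for why it might be manageable points in the wrong direction.

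Second, the plan to extract equalities from pairs that dominate ``in both directions'' (or nearly so) cannot succeed: $n$SD is antisymmetric on distributions, since $F_X^{[n]}=F_Y^{[n]}$ forces $F_X=F_Y$, so two distinct distributions are never comparable in both directions, and ``nearly holds'' gives only inequalities, not the Cauchy-type equality you need. The paper obtains its equality by a different mechanism: it first invokes the known SSD characterization (Corollary 12 of Ryan, 2006), which splits the problem into case (a), $u$ concave and $h$ continuous and convex, and case (b), $h$ already of the claimed form $\lambda s\id_{\{s<1\}}+\id_{\{s=1\}}$. In case (a), a single one-directional TSD comparison of two four-atom lotteries (a risk-apportionment pair, shrunk via two limits $y\downarrow 0$ and $\epsilon\downarrow 0$ and then $n\to\infty$) yields $h(\alpha)/\alpha\ge(h(\alpha)-h(\alpha/2))/(\alpha/2)$, while the \emph{reverse} inequality comes from the convexity of $h$ supplied by the SSD step; together they force $h(\alpha)=2h(\alpha/2)$ and hence $h(s)=s$. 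Your proposal omits this SSD input entirely, yet it is what provides the case split (so that the genuinely discontinuous solutions with $\lambda<1$ are not accidentally excluded), the continuity of $h$ needed to justify the limiting arguments, the one-sided derivatives of $u$ used to cancel the $u$-dependence, and the convexity that converts your one-sided constraints into the additivity relation. Without some substitute for that step, the functional-equation route as sketched does not close.
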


\begin{proof}[Proof of Lemma \ref{lm-consh}]
{\color{blue}Note that a larger $n$ corresponds to a weaker form of $n$SD, which in turn corresponds to a stronger consistency condition. Therefore, it suffices to verify the result for the case of TSD, i.e., $n=3$.}
Suppose that $R_{u,h}:\mathcal X_{[a,b]}\to\R$ is consistent with TSD. Note that consistency with TSD is stronger than consistency with SSD. By Corollary 12 of \cite{R06}, we know that one of the following cases holds: (a) $u$ is increasing and concave, and $h$ is continuous and convex; (b) $u\in\mathcal U$ and $h(s)=\lambda s\id_{\{s<1\}}+\id_{\{s=1\}}$ for all $s\in[0,1]$ with some $\lambda\in[0,1)$. Case (b) is included in this lemma. 
Suppose now that Case (a) holds. We aim to show that $h$ is an identity function on $[0,1]$. Note that $u\in\mathcal U$ is increasing and concave.
We assume without loss of generality that $a<0<b$ and $u'(0)>0$ where $u'$ is the derivative of $u$.
For $n\in\N$, $\alpha\in(0,1)$ and $y,z,\epsilon\in\R$ with $0<y\le \epsilon$ and $a\le z<-n(n-1)\epsilon$ and $y+n\epsilon\le b$, we construct two simple random variables as follows:

\begin{center}
\begin{tikzpicture}
  \node at (0,0) {$0$};
  \draw (0.3,0) -- (2,1) node[midway, above] {\footnotesize  $\frac{1}{2}\alpha$};
  \draw (0.3,0) -- (2,-1) node[midway, below] {\footnotesize $\frac{1}{2}\alpha$};
 \draw (0.3,0) -- (2.1,0) node[midway, below] {\footnotesize  ~~~~~~~~$1-\alpha$};
  \node at (2.5,1) {$ \tilde{\varepsilon}$};
  \node at (2.5,-1) {$y$};
    \node at (2.5,0) {$z$};
  
  \node at (5,0) {$Y$};
  \draw (5.3,0) -- (7,1) node[midway, above] {\footnotesize $\frac{1}{2}\alpha$};
  \draw (5.3,0) -- (7,-1) node[midway, below] {\footnotesize $\frac{1}{2}\alpha$};
   \draw (5.3,0) -- (7.1,0) node[midway, below] {\footnotesize ~~~~~~~~$1-\alpha$};
  \node at (7.5,1) {$0$};
  \node at (7.5,-1) {$y + \tilde{\varepsilon}$};
\node at (7.5,0) {$z$};
\end{tikzpicture}
\end{center}
where $\widetilde{\epsilon}$ is a two-point random variable with a zero mean and the distribution of the form $F_{\widetilde{\epsilon}}=(1-1/n)\delta_{n\epsilon}+(1/n)\delta_{-n(n-1)\epsilon}$. Specifically, we have
\begin{align*}
F_{X}=\left(\frac{1}{2}-\frac{1}{2n}\right)\alpha\delta_{n\epsilon}+\frac{1}{2n}\alpha\delta_{-n(n-1)\epsilon}+\frac{1}{2}\alpha\delta_y+(1-\alpha)\delta_z
\end{align*}
and
\begin{align*}
F_{Y}=\left(\frac{1}{2}-\frac{1}{2n}\right)\alpha\delta_{y+n\epsilon}+\frac{1}{2n}\alpha\delta_{y-n(n-1)\epsilon}+\frac{1}{2}\alpha\delta_0+(1-\alpha)\delta_z.
\end{align*}
Note that $y-\epsilon\le 0<y$ and $z<-n(n-1)\epsilon$, and we have
\begin{align*}
b\ge y+n\epsilon>n\epsilon>y>0>y-n(n-1)\epsilon>-n(n-1)\epsilon>z\ge a,
\end{align*} 
which implies $X,Y\in\mathcal X_{[a,b]}$. It is straightforward to check that $X\le_{\rm TSD}Y$ (see e.g., \cite{CET13}).
Denote by $a_n=(1-1/n)/2$ and $b_n=1-1/(2n)$. It holds that
\begin{align*}
R_{u,h}(X)&=u(n\epsilon)h(\alpha a_n)+u(y)(h(\alpha b_n)-h(\alpha a_n))\\
&+u(-n(n-1)\epsilon)(h(\alpha)-h(\alpha b_n))+u(z)(1-h(\alpha))
\end{align*}
and
\begin{align*}
R_{u,h}(Y)&=u(y+n\epsilon)h(\alpha a_n)+u(0)(h(\alpha b_n)-h(\alpha a_n))\\
&+u(y-n(n-1)\epsilon)(h(\alpha)-h(\alpha b_n))+u(z)(1-h(\alpha)).
\end{align*}
Since $R_{u,h}$ is consistent with TSD, we have $R_{u,h}(X)\le R_{u,h}(Y)$, and hence,
\begin{align*}
&(u(y+n\epsilon)-u(n\epsilon))h(\alpha a_n)+(u(y-n(n-1)\epsilon)-u(-n(n-1)\epsilon))(h(\alpha)-h(\alpha b_n))\\
&\ge (u(y)-u(0))(h(\alpha b_n)-h(\alpha a_n)).
\end{align*}
Letting $y\downarrow 0$, it holds that for all $\alpha\in(0,1)$, $n\ge 1$ and sufficiently small $\epsilon>0$,
\begin{align*}
u'_+(n\epsilon)h(\alpha a_n)+u_+'(-n(n-1)\epsilon)(h(\alpha)-h(\alpha b_n))\ge u'(0)(h(\alpha b_n)-h(\alpha a_n)),
\end{align*}
where $u'_+$ is the right-derivative of $u$. Letting $\epsilon\downarrow 0$ in the above equation and noting that $u'(0)>0$, we have
\begin{align*}
h(\alpha a_n)+h(\alpha)-h(\alpha b_n)\ge h(\alpha b_n)-h(\alpha a_n).
\end{align*}
With the relation that $b_n-a_n=1/2$ for all $n\ge 1$, it follows that
\begin{align*}
\frac{h(\alpha)}{\alpha}\ge \frac{h(\alpha b_n)-h(\alpha a_n)}{\alpha(b_n-a_n)},~~\forall \alpha\in(0,1),~n\ge 1.
\end{align*}
Next, letting $n\to\infty$ and noting that $h$ is continuous in Case (a) yields
\begin{align*}
\frac{h(\alpha)}{\alpha}\ge 
\frac{h(\alpha)-h(\alpha/2)}{\alpha/2},~~\forall \alpha\in(0,1).
\end{align*}
On the other hand, using the convexity of $h$ and $h(0)=0$ yields
\begin{align*}
\frac{h(\alpha)}{\alpha}\le \frac{h(\alpha)-h(\alpha/2)}{\alpha/2},~~\forall \alpha\in(0,1).
\end{align*}
Therefore, we have concluded that $h(\alpha)=2h(\alpha/2)$ for all $\alpha\in(0,1)$. Since $h$ is continuous and convex on $[0,1]$, it is straightforward to verify that $h(s)=s$ for $s\in[0,1]$. This completes the proof.
\end{proof}

\begin{proof}[Proof of Theorem \ref{th-main}]

{\bf Sufficiency.} Case (i) is trivial because $u$ is increasing and the mapping $X\mapsto\min X$ is consistent with $n$SD.
Suppose now Case (ii) holds. Proposition \ref{prop-eq} implies that the mapping $X\mapsto\E[u(X)]$ is consistent with $n$SD. Combining with the result in Case (i), we have that $R_{u,h}$ is consistent with $n$SD.

{\bf Necessity.}
By Lemma \ref{lm-consh}, we know that $h(s)=\lambda s\id_{\{s<1\}}+\id_{\{s=1\}}$ for all $s\in[0,1]$ with some $\lambda\in[0,1]$. Hence, 
\begin{align*}
R_{u,h}(X)=\lambda \E[u(X)]+(1-\lambda)\min u(X),~~X\in\mathcal X_{[a,b]}.
\end{align*}
If $\lambda=0$, then $R_{u,h}$ has the form in Case (i). If $\lambda>0$, we will verify that $X\mapsto\E[u(X)]$ is consistent with $n$SD. 
For $X,Y\in\mathcal X_{[a,b]}$ with $X\le_{n}Y$, 
define $X',Y'\in\mathcal X_{[a,b]}$ with their distributions as follows:
\begin{align*}
F_{X'}=\frac12\delta_{a}+\frac12F_{X}~~{\rm and}~~F_{Y'}=\frac12\delta_{a}+\frac12F_{Y}.
\end{align*}
It is straightforward to check that $X'\le_{n}Y'$ and $\min X'=\min Y'=a$.
Since $R_{u,h}$ is consistent with $n$SD, we have
\begin{align*}
0\le R_{u,h}(Y')-R_{u,h}(X')=\lambda(\E[u(Y')]-\E[u(X')])
=\frac{\lambda}{2}(\E[u(Y)]-\E[u(X)]).
\end{align*}
Hence, we have verified that $X\mapsto \E[u(X)]$ is consistent with $n$SD, and Proposition \ref{prop-eq}
shows that $u$ is an $n$-monotone function. This
completes the proof of necessity.
\end{proof}

\small

\end{document}